\documentclass{article}

\usepackage{amssymb}
\usepackage{amsmath}
\usepackage{amsthm}
\usepackage[ruled,vlined]{algorithm2e}
\usepackage[dvipsnames]{xcolor}

\usepackage{tikz}
\usetikzlibrary{shapes,snakes}
\usepackage{todonotes}
\usepackage{graphicx}

\newtheorem{proposition}{Proposition}
\newtheorem{lemma}{Lemma}
\newtheorem{theorem}{Theorem}
\newtheorem{definition}{Definition}

\newcommand{\OPT}{\mathrm{OPT}}
\newcommand{\ALG}{\mathrm{ALG}}

\newcommand{\prob}[1]{\mathbb{P}\left[ #1 \right]}

%we should probably decide on one style at some point
 
\newcommand{\Ex}[1]{\mbox{\rm\bf E}\left[#1\right]}

\title{A Collection of Lower Bounds for Online Matching on the Line}

\author{Antonios Antoniadis\thanks{Universit\"at des Saarlandes and Max Plank Institut f\"ur
Informatik, Saarland Campus, Germany. \texttt{aantonia@mpi-inf.mpg.de}. Supported by the Deutsche Forschungsgemeinschaft (DFG, German Research Foundation) under AN 1262/1-1.}
\and Carsten Fischer \thanks{Department of Computer Science, University of Bonn, Germany. \texttt{carsten.fischer@uni-bonn.de}. Supported by ERC Starting Grant 306465 (BeyondWorstCase).}
\and Andreas T\"onnis \thanks{Departamento de Ingenier\'ia Matem\'atica, Universidad de Chile, Chile. \texttt{atoennis@dim.uchile.cl}. Supported by Conicyt PCI PII 20150140. Work was done while the author was employed at the University of Bonn. Supported by ERC Starting Grant 306465 (BeyondWorstCase).}}

\begin{document}
\maketitle

\begin{abstract}
% !TeX root = main.tex
% !TEX root = main.tex

In the online matching on the line problem, the task is to match a set
of requests $R$ online to a given set of servers $S$. The distance
metric between any two points in $R\,\cup\, S$ is a line metric and the
objective for the online algorithm is to minimize the sum of distances
between matched server-request pairs. This problem is well-studied and
-- despite recent improvements --
there is still a large gap between the best known lower and upper bounds:
The best known deterministic algorithm for the problem is $O(\log^2n)$-competitive,
while the best known deterministic lower bound is $9.001$. The lower
and upper bounds for randomized algorithms are $4.5$ and $O(\log n)$ respectively.

We prove that any deterministic online algorithm which in each round:
$(i)$ bases the matching decision only on information \emph{local} to
the current request, and $(ii)$ is \emph{symmetric} (in the sense that
the decision corresponding to the mirror image of some instance $I$ is
the mirror image of the decision corresponding to instance $I$), must
be $\Omega(\log n)$-competitive. We then extend the result by showing
that it also holds when relaxing the symmetry property so that the
algorithm might prefer one side over the other, but only up to some
degree.  This proves a barrier of $\Omega(\log n)$ on the competitive
ratio for a large class of ``natural'' algorithms. This class includes
all deterministic online algorithms found in the literature so far.

Furthermore, we show that our result can be extended to randomized
algorithms that locally induce a symmetric distribution over the
chosen servers. The $\Omega(\log n)$-barrier on the competitive ratio
holds for this class of algorithms as well. 
\end{abstract}

% !TeX root = main.tex

\section{Introduction}

The \emph{online matching on the line problem (OML)} is a notorious
special case of the \emph{online metric matching problem (OMM)}. In
both problems a set of servers $\{s_1, s_2, \dots, s_n\} =: S$ is
initially given to the algorithm, then requests from a set $R := \{r_1, r_2,
\dots, r_n\}$ arrive online one-by-one. In OMM all servers and requests
are points in an arbitrary metric, while in OML all points,
$r_i\in R$ and $s_j\in S$, correspond to numbers in $\mathbb{R}$ and
the distance between two such points is given by the line metric, that
is the 1-dimensional Euclidian metric, i.e., $d(r_i, s_j) = |r_i -
s_j|$.  Whenever a request $r$ arrives it has to be
matched immediately and irrevocably to an unmatched server $s$, and
the edge $e=\langle r,s\rangle$ gets added to the matching $M$. The
objective for the online algorithm is to minimize the sum of distances
between matched server-request pairs $\min\sum_{e\in M} d(e)$.

The original motivation for OML is a scenario where items of
different size have to be matched in an online fashion. For example,
in a ski rental shop, skis have to be matched online to customers of
approximately the same size. In this setting, customers arrive online
one-by-one and have to be served immediately. A set
of skis/customers of size $x$ can be represented by a server/request at
point $x$ on the real line, and the length of an edge in the matching
would represent the ``amount of mismatch'' between the corresponding pair of
skis and the customer.

Since the concepts of local and local symmetric algorithms will be important
for the discussion below as well as our results, we start by introducing the necessary definitions. 
 When matching a request $r$ to a server, one can restrict the server
 choice to the nearest free servers $s_L$ and $s_R$, placed on the left
 and right of $r$ respectively.  We call these two servers the
 \emph{surrounding servers} for request $r$. It is easy to see, by an
 exchange argument, that any online algorithm can be converted to one
 that chooses among the surrounding servers for each request -- without
 increasing the competitive ratio.

\begin{figure}[h]
\centering
\includegraphics[page=2,width=1.0\textwidth]{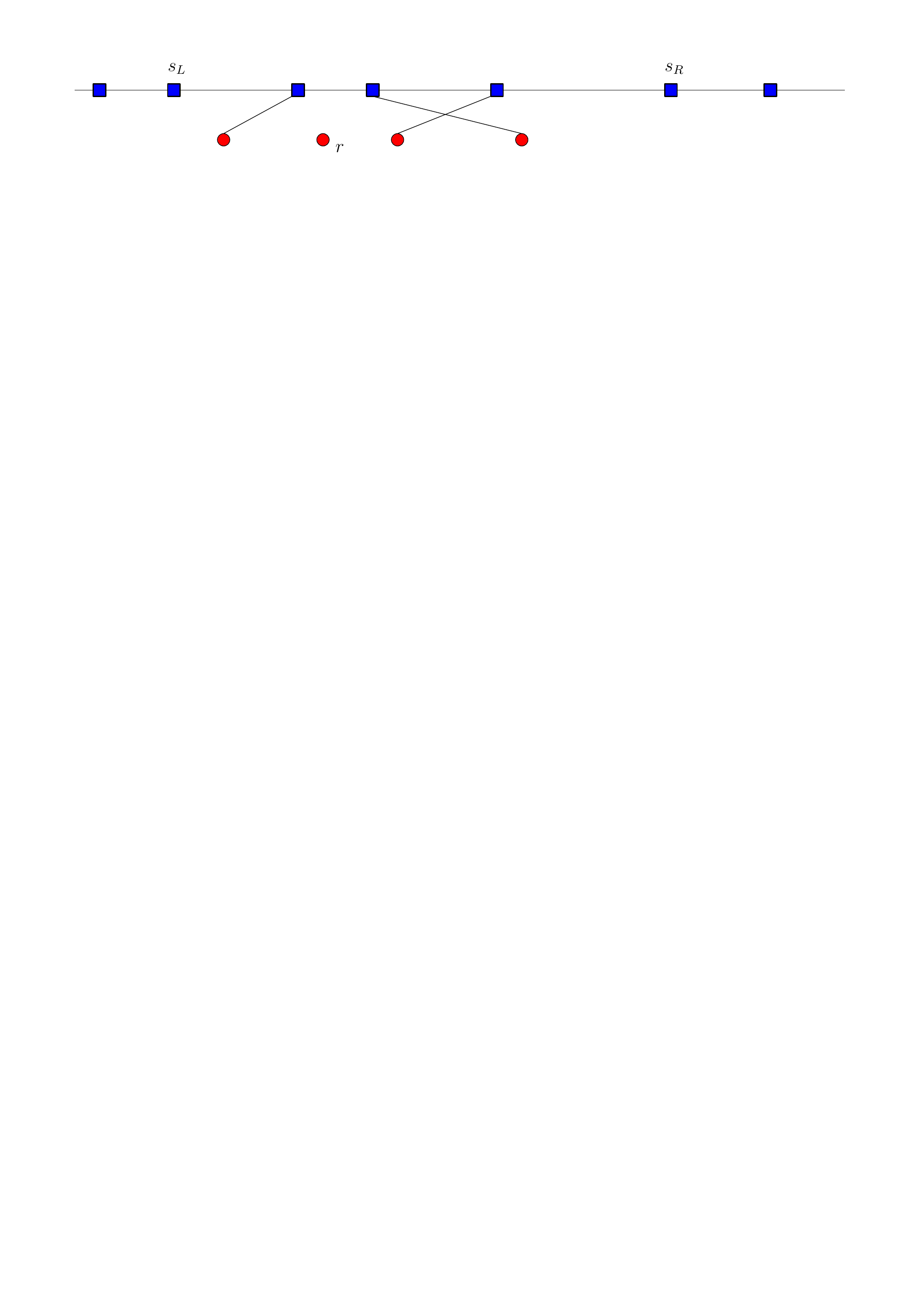}
\caption{We use red circles to represent requests and blue squares to represent servers. Lines show to which server each request is matched to. For a clearer representation we will often arrange servers and requests on different lines. It is without loss of generality to consider algorithms that match a request $r$ to one of its surrounding servers $s_L$ or $s_R$.}
\label{fig:local}
\end{figure}

Koutsoupias and Nanavati introduced the concept of
\emph{local} algorithms which will play a central role in our results:

\begin{definition}[\cite{DBLP:conf/waoa/KoutsoupiasN03}]
  Let $s_L$ and $s_R$ be the surrounding free servers for request $r$. An
  online algorithm is called local if it serves $r$ with one of $s_L$
  and $s_R$ and furthermore the choice is based only upon the history
  of servers and requests in the \emph{local-interval} $I_r = [s_L,
  s_R]$ of $r$.
\end{definition}

Note that this also implies that local algorithms are invariant with
respect to parallel translation of server and request locations. Since
the algorithm is only allowed to use local information, it can only
use relative positions within an interval and not absolute positions of
servers or requests.

One can restrict the class of local algorithms by introducing the
concept of \emph{local symmetric algorithms}. The main idea is that a
local algorithm is also symmetric, if mirroring the whole interval
$[s_L,s_R]$, also causes the algorithm to ``mirror'' its server
choice. 

\begin{definition}
  Let $A$ be a local algorithm, and consider the arrival of a request
  $r$ with a local interval $I_r=[s_L,s_R]$. 
Let
  $m = (s_L + s_R)/2$ be the point in the middle of interval $I_r$,
  and let $I_r'$ be the reflection of $I_r$ across point $m$.
  We say that $A$ is \emph{local symmetric}, if for any interval $I_r$
  when it chooses to match $r$ to a server $s\in \{s_L,s_R\}$ then
  it chooses to match the mirror image of $r$ to the mirror image of
  $s$ in $I_r'$.
\label{def:local_symmetric}
\end{definition}

See Figure~\ref{fig:symmetric} for an example of a symmetric
algorithm.

\begin{figure}[h]
\centering
\includegraphics[width=1.0\textwidth]{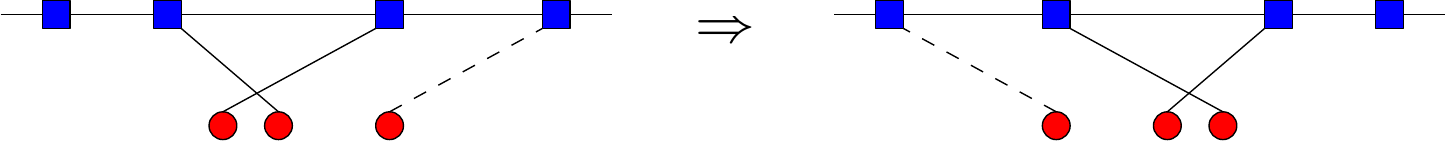}
\caption{The dashed line represents the choice of the algorithm. If we reflect the local-interval $[s_L,s_R]$, a local symmetric
    algorithm will ``reflect'' its decision.}
\label{fig:symmetric}
\end{figure}

A careful reader might have noticed that
Definition~\ref{def:local_symmetric} does not specify how a local
symmetric algorithm behaves in the case where intervals $I_r$ and
$I_r'$ happen to be indistinguishable\footnote{This could be resolved by for
example letting the algorithm choose the server arbitrarily or 
allow the adversary to force the server selection in this specific
border case.}. However this is not important in the scope of this paper,
since all of our constructions only use intervals $I_r$ and $I_r'$
that are clearly distinguishable from each other.

As we will see in the next subsection, a generalization of the class of local
symmetric algorithms contains all studied algorithms in the literature
for the problem.

\subsection{Related Work}
\subsubsection{The story so far.}
For the more general OMM problem, where the underlying metric is not
restricted to the line, the best possible competitive ratio is $2n-1$,
and algorithms that attain this ratio were analyzed
in~\cite{DBLP:journals/tcs/KhullerMV94,DBLP:journals/jal/KalyanasundaramP93,DBLP:conf/approx/Raghvendra16}.
In essence all three results employ a variant of the same online
\emph{$t$-net-cost algorithm}. The $t$-net cost algorithm, for each
round $i$, calculates a specific offline matching $M_{i-1}$ among the
server set and the $i-1$ first requests and then finds the minimum
$t$-net cost of an augmenting path on $M_{i-1}$ and the current
request. The $t$-net-cost augmenting path is given by weighting the
forward edges in the augmenting path by parameter $t$ before
subtracting the backward edges. This identifies a new free server to
which the current request is matched. Khuller, Mitchell and Vazirani,
as well as Kalyanasundaram and
Pruhs~\cite{DBLP:journals/tcs/KhullerMV94,DBLP:journals/jal/KalyanasundaramP93}
independently studied the variant of this algorithm for $t=1$ and with
$M_{i-1}$ being the optimal offline matching on the first $i-1$
requests. They therefore augment with the classical Hungarian method,
while Raghvendra~\cite{DBLP:conf/approx/Raghvendra16} employs
$t=n^2+1$ and a more complex offline matching $M_{i-1}$.  One can
easily show that all variants of the $t$-net cost algorithm are local
symmetric when applied on the line metric.

However, in the more special case of line metrics, the linear lower bound does
not hold. For a long time, the best known lower bound for the
problem was $9$ and it was conjectured to be tight.  But in 2003, it
was shown, that no deterministic algorithm for OML could be better
than $9.001$-competitive \cite{DBLP:journals/tcs/FuchsHK05} and this
remains the best known lower bound to date.

Regarding upper bounds in the line metric, Koutsoupias and
Nanavati~\cite{DBLP:conf/waoa/KoutsoupiasN03} studied the \emph{work
  function algorithm (WFA)}, and showed that this is also
$O(n)$-competitive, along with a lower bound of $\Omega(\log n)$ on
its competitive ratio. Intuitively, WFA tries to balance out the
$1$-net-cost algorithm with the greedy algorithm that matches each
request to the closest available server upon arrival. Koutsoupias and
Nanavati, conjectured that WFA is $\Theta(\log n)$-competitive, but
whether that is the case or not remains an open question. Although it
is not straightforward that the work function algorithm is local, this
was proven in~\cite{DBLP:conf/waoa/KoutsoupiasN03}, and symmetry
easily follows by the definition of the algorithm, placing WFA in the
class of local symmetric algorithms.

The first deterministic algorithm to break the linear competitive
ratio was the \emph{$k$-lost cows algorithm ($k$-LCA)} by Antoniadis
et al.~\cite{DBLP:conf/waoa/AntoniadisBNPS14}. Their algorithm uses a
connection of the matching problem to a generalization of the
classical lost cow search problem from one to a larger number of
cows. They give a tight analysis and prove that their algorithm is
$\Theta(n^{0.58})$-competitive.
It is easy to verify that the $k$-lost cow algorithm is local. However
by design, $k$-LCA has a very slight ``bias'' towards one direction
and is therefore not symmetric. A slight generalization of our
construction for local symmetric algorithms is enough to capture this
algorithm as well.

Very recently, Nayyar and Raghvendra~\cite{focs_paper} studied the
$t$-net cost algorithm for a constant $t>1$ on the line metric
(actually the considered setting is slightly more general). Through a
technically involved analysis they showed that for such values of $t$
the $t$-net-cost algorithm is $O(\log^2 n)$-competitive. As already
discussed, this algorithm is also local symmetric.

Finally, it is worth noting, and can be easily verified, that the
\emph{greedy algorithm} which matches each request to the closest free
server is $\Omega(2^n)$-competitive.

Regarding randomized algorithms for the more general online metric
matching problem, there are several known sub-linear
algorithms. Meyerson et al.~gave a randomized greedy algorithm that uses a tree embedding of the metric space and is $O(\log^3(n))$-competitive~\cite{DBLP:conf/soda/MeyersonNP06}.
%achieved a competitive ratio of
%$O(\log^3(n))$ for a randomized greedy algorithm on a tree embedding
%of the metric space~\cite{DBLP:conf/soda/MeyersonNP06}.
Bansal et
al.~refined this approach and gave a $O(\log^2(n))$-competitive
algorithm~\cite{DBLP:journals/algorithmica/BansalBGN14}. More
recently, Gupta and Lewi gave two greedy algorithms based on tree
embeddings that are $O(\log(n))$-competitive for doubling metrics -- and therefore
also the line metric. Additionally, they analyzed the randomized
harmonic algorithm and showed that it is $O(\log(n))$-competitive for
line metrics~\cite{DBLP:conf/icalp/GuptaL12}. All these algorithms
locally induce a symmetric distribution over the chosen
servers. 

An extensive survey of the OML problem can be found under~\cite{DBLP:journals/sigact/Stee16} and~\cite{DBLP:journals/sigact/Stee16a}.

\subsubsection{Other Related Problems.}
A closely related problem is the transportation problem, which is a
variant of online metric matching with resource
augmentation. Kalyanasundaram and Pruhs~\cite{KalyanasundaramP00}
showed that there is a $O(1)$-competitive algorithm for this problem
if the algorithm can use every server twice, whereas the offline
benchmark solution only uses each server once. Subsequently, Chung et
al.~\cite{DBLP:conf/latin/ChungPU08} gave a polylogarithmic algorithm
for the variant where the algorithm has one additional server at every
point in the metric where there is at least one server.

Another problem which resembles online matching is the $k$-server
problem (see~\cite{Koutsoupias09} for a survey). The main difference
between the two problems is that in the $k$-server problem a server
can be used to serve subsequent requests, while in the online matching
problem a server has to be irrevocably matched to a request.

\subsection{Our Contribution}

Our first result is showing that any deterministic algorithm, that
is local symmetric, has to be $\Omega(\log n)$-competitive.
 
As already mentioned the class of local symmetric algorithms includes
all known algorithms except for the $k$-LC algorithm. However we are
able to generalize the construction of our lower bound instance so
that it contains an even wider class of algorithms -- including
$k$-LC.  The main implication of our work is that new algorithmic
insights are necessary if one hopes to obtain a $o(\log
n)$-competitive algorithm for the problem.

Our lower bound instance can be seen as a full binary tree with carefully chosen
distances on the edges. We describe the construction of this tree
recursively. The instance is designed in such a way, that every
request arrives between two subtrees, and the algorithm always has to
match it to one of the two furthest leafs of these subtrees. This
incurs a cost of $\Omega(n)$ at each of the $\log(n)$ levels of the
tree. In
contrast, the optimal solution always matches a request to
neighboring server for a total cost of $O(n)$.

We complement these results with propositions that showcase the power
and limitations of our construction. First, we show that any algorithm
that, for every level of the lower bound instance, has a bias bounded
by a factor of two with respect to the largest bias on the previous
levels fulfills the conditions of our main theorem. To beat our lower
bound instance an algorithm would require an asymmetric bias in its
local decision routine and furthermore this bias would have to grow
exponentially by more than a factor of two as the depth of the
instance increases. To the best of our knowledge, the only known
algorithm that features a local bias is the \emph{$k$-LC} algorithm
by Antoniadis et al.~\cite{DBLP:conf/waoa/AntoniadisBNPS14} and its
bias is only $(1+\epsilon)$. 

We denote that it seems hard to conceptualize a ``reasonable'' local
algorithm for the problem that has a bias greater than two and we
therefore believe/conjecture that $\Omega(\log(n))$ is likely a lower
bound for the even broader class of local algorithms.

Furthermore, we show that the lower bound also applies to a wide class
of randomized algorithms. If the instance can be tailored to the
algorithm in such a way, that the algorithm induces a symmetric
distribution over the free servers on each local subinstance, then an
analogous lower bound of $\Omega(\log n)$ holds true. For this result,
we have the same conditions on the bias of the algorithm as in the
deterministic setting. We show that these conditions are fulfilled by
the \textsc{Harmonic}-algorithm introduced by Gupta and Lewi
\cite{DBLP:conf/icalp/GuptaL12}. This simple randomized algorithm is
known to be $\Theta(\log n)$-competitive. 

%Due to space constraints several proofs are deferred to in the appendix.
\section{Lower Bounds for Deterministic Algorithms}
% !TeX root = main.tex
% !TEX root = main.tex

This section is devoted to our main results for deterministic
algorithms. First, we show that any local symmetric algorithm must
be $\Omega(\log n)$-competitive. We already saw that this class of
algorithms is broad and captures all known deterministic algorithms
except the $k$-LCA. Then we extend the construction of our lower bound
towards local algorithms that have a limited asymmetric bias in their
decision routine.

The construction for both proofs resembles a full binary tree. It is
defined recursively such that the behavior of the online algorithm on
any subtree exactly mirrors its behavior on the sibling subtree. In
order to achieve this, we will define for each level of the tree,
intervals which consist of a new request located between two subtrees
of one level lower so that the only two free servers in the interval
are the ones furthest from the current request. We start at level one
with simple intervals that consist of two servers and one request
roughly in the center between the servers. 
The algorithm, by locality, will have to match the current request to one of the free servers. If, in one subtree, the algorithm matches to the right, we can create a similar subtree where the algorithm matches to the left (and vice versa) by only marginally changing the distances.
We recursively repeat this construction while
ensuring that, for any two sibling subtrees, the one on the left has
the leftmost server free and the one on the right the rightmost one.

We start with the special case of local symmetric algorithms and give
the formal construction of the lower bound instance. 

At the base level, we have trees $T_0$ and $\overline{T}_0$ that
contain a single server each. On level $i\in\{1, \dots, k\}$, we
combine two trees: $T_{i-1}$ which has its leftmost
server free, and
$\overline{T}_{i-1}$ which has its rightmost server free, into an interval. We create an
interval with $T_{i-1}$ followed by request $r_i$ at a distance $1$,
which is in turn followed by $\overline{T}_{i-1}$ at a distance
$1+\epsilon$ to the right of $r_i$. If the algorithm matches $r_i$ to
the free server on the right (resp. left) in this interval then this creates
tree $T_i$ (resp. $\overline{T}_i$), and by symmetry of the algorithm if we
swap the distances $1$ and $1+\epsilon$ around it will match $r_i$ to
the left (resp. right) thus creating tree $\overline{T}_{i}$ (resp. $T_i$).  We
will see that, up to the highest tree level, $T_i$ and $\overline{T}_i$ are
always a mirror image of each other.

It is helpful to define the interval of a tree $T$ as $I(T)$. $I(T)$
is the interval from the leftmost to the rightmost server in $T$
just before the request of $T$ was matched, i.e., $I(T)$ contains
exactly two free servers located at its endpoints, and exactly one
unmatched request contained between the two subtrees of the root of $T$.

\begin{theorem}
  Let $A$ be a local symmetric algorithm. Then, there exists an
  instance with $n$ servers such that
  the competitive ratio of $A$ is in $\Omega(\log(n))$.
\label{thm:main-thm}
\end{theorem}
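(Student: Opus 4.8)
The plan is to build the recursive family of trees $T_k, \overline{T}_k$ exactly as described in the paragraph preceding the theorem, and to control two quantities as functions of the level $i$: the \emph{width} $w_i$ of the interval $I(T_i)$ (distance between its two free endpoint-servers just before the level-$i$ request arrives), and the \emph{cost} $c_i$ that the algorithm pays for serving the single request $r_i$ of $T_i$. By construction, when we merge $T_{i-1}$ (leftmost server free) and $\overline{T}_{i-1}$ (rightmost server free) with $r_i$ placed at distance $1$ from the free end of $T_{i-1}$ and $1+\epsilon$ from the free end of $\overline{T}_{i-1}$, the two free servers available to $A$ in this new interval are precisely the far endpoints, so $A$ must match $r_i$ across essentially the whole interval. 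The first step is therefore to write the recursion $w_i = w_{i-1} + (1) + (1+\epsilon) + w_{i-1} = 2w_{i-1} + 2 + \epsilon$ with $w_0 = 0$, which gives $w_i = \Theta(2^i)$, and to observe that the algorithm's cost for $r_i$ is at least $w_i - \epsilon \ge w_{i-1} = \Theta(2^{i-1})$ regardless of which of the two far servers it picks.

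The second step is the symmetry bookkeeping: I must argue that whatever $A$ does on the merged interval, I can realize \emph{both} $T_i$ and $\overline{T}_i$ as valid subinstances. If $A$ matches $r_i$ to the right server I have directly produced $T_i$ (leftmost server of the merged tree still free); to produce $\overline{T}_i$ I present the mirror-image interval, i.e.\ swap the roles of the $1$ and $1+\epsilon$ gaps, and local symmetry (Definition~\ref{def:local_symmetric}) forces $A$ to match to the left server, leaving the rightmost server free. Here I need the remark from the excerpt that $I(T_i)$ and its reflection are genuinely distinguishable (the gaps $1$ and $1+\epsilon$ differ), so the border case of Definition~\ref{def:local_symmetric} never arises; and I need locality to guarantee that $A$'s decision on this interval depends only on its contents, so gluing it into a larger instance later does not change the decision. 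By induction $T_i$ and $\overline{T}_i$ are reflections of one another up to the top level, which is exactly what keeps the recursion self-consistent.

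The third step assembles the final instance: iterate the construction for $k = \lfloor \log_2 n \rfloor$ levels, so that $T_k$ uses $n = 2^k$ servers and $k$ requests — wait, that is only $k$ requests, so to get $n$ requests as well I instead take $k$ copies of the level-$k$ tree side by side, or more simply note that the tree $T_k$ naturally contains $2^{k-1}$ requests at the bottom levels; either way the number of requests is $\Theta(n)$ and the number of levels is $\Theta(\log n)$. The algorithm's total cost is $\sum_{i=1}^{k} (\text{number of level-}i\text{ requests}) \cdot c_i \ge \sum_{i=1}^{k} 2^{k-i} \cdot \Theta(2^{i-1}) = \Theta(k \cdot 2^{k}) = \Theta(n \log n)$. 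For the upper bound on $\OPT$, I exhibit the matching that sends each request $r_i$ to the \emph{near} server (the one at distance $1$ or $1+\epsilon$): this is a valid perfect matching because the near servers are consumed bottom-up without conflict, and its total cost is $\sum_i (\text{number of level-}i\text{ requests}) \cdot (1+\epsilon) = \Theta(n)$. Dividing gives a competitive ratio of $\Omega(\log n)$.

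The main obstacle I anticipate is the second step — verifying rigorously that the recursive "mirror construction" is well-defined, i.e.\ that presenting the reflected interval really does let me force $A$ into the complementary choice, \emph{and} that this remains true once the interval is embedded as a subtree inside the next level. This requires carefully combining the locality property (the decision is a function of the local interval's contents only, including the full history of servers/requests inside it) with the symmetry property (reflecting that content reflects the decision), and checking that the inductive hypothesis "$T_i$ and $\overline{T}_i$ are exact reflections" is strong enough to be propagated. The width and cost recursions, and the $\OPT$ bound, are then routine.
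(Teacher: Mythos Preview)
Your proposal follows essentially the same argument as the paper: an induction (via local symmetry and the distinguishability of the $1$ vs.\ $1+\epsilon$ gaps) that $T_i$ and $\overline{T}_i$ are mirror images, so the only free servers for $r_i$ are the far endpoints at distance $\Theta(2^i)$, and then summing $2^{k-i}$ level-$i$ requests against an OPT of cost $\Theta(2^k)$ from the near-server matching. Two small slips to clean up before writing it out: $r_i$ sits at distance $1$ (resp.\ $1+\epsilon$) from the \emph{near}, non-free end of $T_{i-1}$ (resp.\ $\overline{T}_{i-1}$), not the free end; and the algorithm's cost for $r_i$ is at least $w_{i-1}+1\approx w_i/2$, not $w_i-\epsilon$ --- but your final bound $c_i=\Theta(2^{i-1})$ and the resulting $\Omega(\log n)$ ratio are correct.
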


\begin{proof}
  We will show by induction that the interval $I(T_k)$ is always a
  mirror image to interval $I(\overline{T_k})$ Therefore, local
  symmetric online algorithms will match in one of them (w.l.o.g.\ in
  $I(T_k)$) to the right and in the other interval to the left. In
  this way, for each request $r_i$ on the $i$-th level of the
  recursion, only the leftmost and rightmost servers in the
  respective interval are available.

  Since the trees $T_0$ and $\overline{T}_0$ are identical, it
  immediately follows
  that intervals $I(T_1)$ and $I(\overline{T}_1)$ are mirror
  images of each other. Again, due to symmetry of the online
  algorithm, we may assume that $T_1$ leaves the left server
  open and $\overline{T}_1$ leaves its right server open.

  Now, for the inductive step, intervals $I(T_i)$ and
  $I(\overline{T}_i)$ both take the same subtrees $T_{i-1}$ and
  $\overline{T}_{i-1}$ as building blocks and those subtrees are
  already mirror images of each other by the inductive
  hypothesis. Furthermore the distances between the subtrees and
  request $r_i$ are also a mirror image of one another (recall that in
  one tree these distances are $1$ and $1+\epsilon$ and in the other
  one $1+\epsilon$ and $1$). So, $T_i$ and $\overline{T}_i$ must also
  be mirror images of each other. In addition, since in $T_{i-1}$ the
  leftmost server is free and in $\overline{T}_{i-1}$ the rightmost
  server is free, we may adapt the construction so that $T_i$ also leaves the leftmost
  server free and $\overline{T}_i$ also leaves the rightmost server
  free.

  We have established that, when request $r_i$ arrives, the only free
  servers are the left and rightmost servers $s_L$ and $s_R$ of
  $I(T_i)$ (similarily also for $I(\overline{T}_i)$). By construction,
  the distances are $d(r_i, s_L) \geq 2^i - 1$ and $d(r_i, s_R) \geq
  2^i - 1$ because there are $2^{i-1}$ servers in the subtrees
  $T_{i-1}$ and $\overline{T}_{i-1}$, each at a distance of
  $2+\epsilon$ from each other, and all of them, except the outermost
  are already matched. In addition, there are $2^{k-i}$ requests on
  level $i$ in a tree of depth $k$. Meanwhile, the minimum distance
  between a request and a server is $1$, so the competitive ratio is
\[\frac{c(\ALG)}{c(\OPT)} = \frac{ \sum_{i=1}^k 2^{k-i}(2^i - 1)}{\sum_{i=1}^k 2^{k-i} 1} \geq \frac{k2^k - 1}{2^k-1} \geq k - \frac{1}{2^k}\in\Omega(k)\;.\]
\end{proof}
% !TeX root = main.tex
% !TEX root = main.tex

\subsection{Local \& Non-Symmetric Algorithms}\label{sec:deterministic_asymmetric}

We generalize the lower bound for local and symmetric algorithms that
was used to prove Theorem~\ref{thm:main-thm}. Towards this end,
we define a choice function $C:I \rightarrow \{s_L, s_R\}$ that takes
as input an interval $I$ along with an unmatched request
$r\in[s_L,s_R]$. The interval $I$ is such that the only free servers
it contains are $s_L$ and $s_R$ at the left and right end of the
interval respectively. In addition, $I$ includes information about all
other matched servers and requests in between $s_L$ and $s_R$. The
choice function returns $s_R$ if the algorithm decides to match $r$ to
the right and $s_L$ otherwise. By definition, every local algorithm
can be fully characterized by such a choice function.

Our construction follows a similar recursive structure starting with
trees $T_0$ and $\overline{T}_0$ that only contain a single
server. Then from level to level, we again combine two trees $T_{i-1}$
and $\overline{T}_{i-1}$ with a request $r_i$ in between to create a
tree $T_i$ or $\overline{T}_i$. The difference to the previous
construction is the distances between $r_i$ and the nearest server in
the neighboring subtrees. For both trees $T_i$ and $\overline{T}_i$,
let $a_i$ be the distance to the rightmost (and therefore closest)
server of the subtree $T_{i-1}$. Furthermore let $b_i$
(resp. $b_i+\epsilon$) be the distance to the left-most server of
subtree $T_{i-1}$ (resp. $\overline{T}_{i-1}$). Here, $a_i$ and $b_i$
are carefully chosen in such a way that $C(I(T_i)) = s_R$ and
$C(I(\overline{T}_i)) = s_L$. We may assume that such $a_i$ and $b_i$
do always exist, since otherwise we could set one of them to $1$ and
the other to $\infty$ resulting in an unbounded competitive ratio.

\begin{figure}
\centering
\includegraphics[page=4,width=1.0\textwidth]{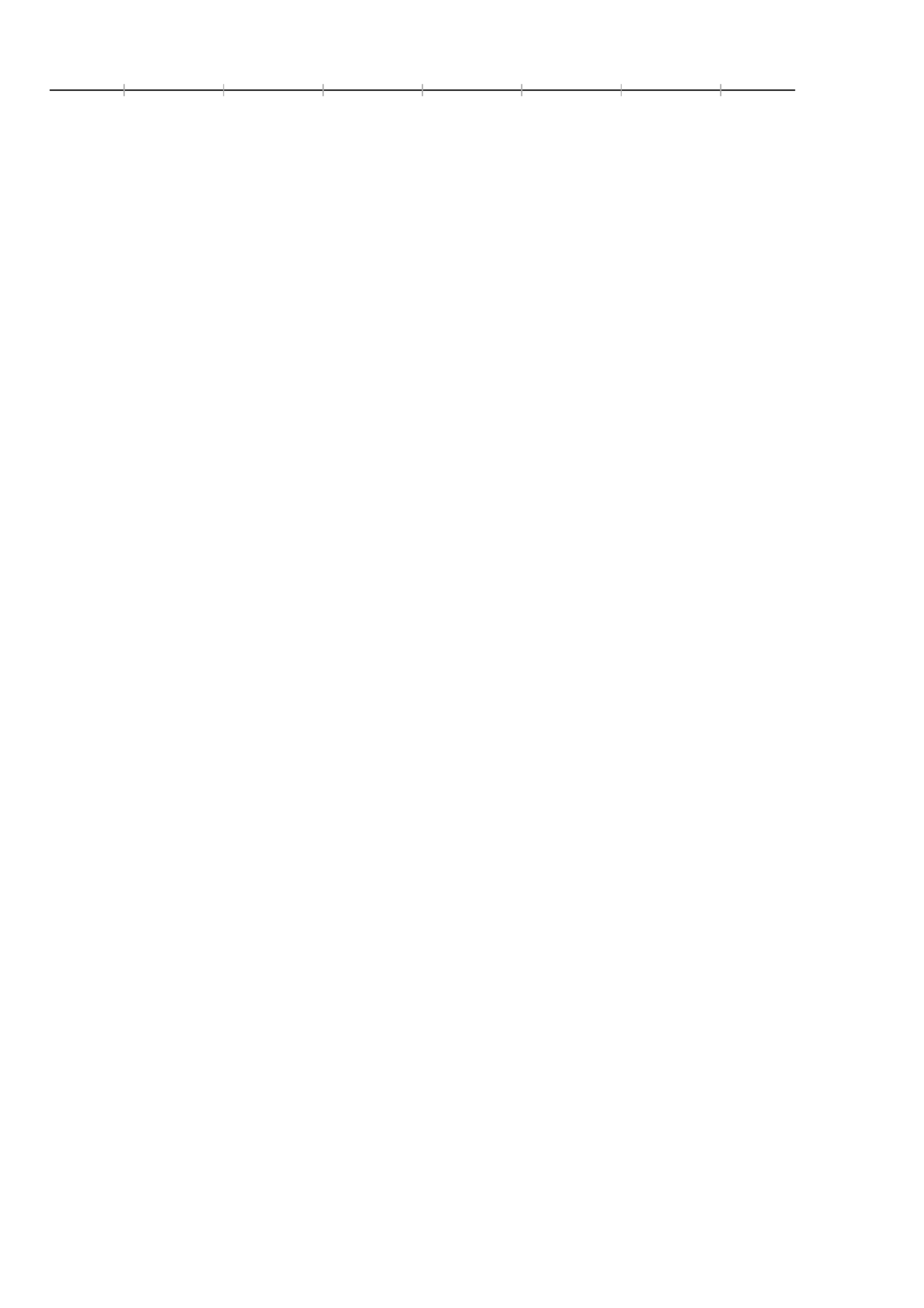}
\caption{The construction for the lower bound in Theorem~\ref{thm:deterministic_local}.}
\label{fig:lb_relsym}
\end{figure}

Our final interval consists of two trees $T_k$ and 
$\overline{T}_k$. To simplify the presentation, we skip the last
request in between them. In other words, we end the process while
still having two free servers. However this is without loss of
generality since the adversary can present two requests, each
collocated with one of the free servers, thus essentially ``removing''
these servers from the instance. A crucial difference to the previous
proof is that although we cannot leverage symmetry of the algorithm in
order to show that $r_i$ gets matched to opposite servers in $I(T_i)$
and $I(\overline{T}_i)$, we now get this property directly by our
choice of $a_i$'s and $b_i$'s. The analysis then follows that in the
proof of Theorem~\ref{thm:main-thm} but is significantly more involved since the
distances can now vary from level to level.

The main result of this section is the following theorem. After
proving it, we discuss its implications to specific classes
of algorithms.

\begin{theorem}
Fix a local online algorithm $A$ and let $x_i = a_i+b_i$, where $a_i$
and $b_i$ are defined for $A$ as described above. If there holds
\[k - \frac{\sum_{i=1}^{k} x_i 2^{-i} i}{\sum_{i=1}^{k} x_i 2^{-i}}\in\Omega(k)\;,\]
then algorithm $A$ is $\Omega(\log(n))$-competitive.
\label{thm:deterministic_local}
\end{theorem}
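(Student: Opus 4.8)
The plan is to mimic the inductive structure of Theorem~\ref{thm:main-thm}, but now tracking the side-lengths of the recursively built trees level by level, since these are no longer constant. First I would set up the induction hypothesis: for every $i\in\{0,1,\dots,k\}$, the trees $T_i$ and $\overline{T}_i$ are built so that $T_i$ has its leftmost server free and $\overline{T}_i$ has its rightmost server free, and — crucially — by the very choice of $a_i$ and $b_i$ (which is guaranteed to exist, else the ratio is unbounded) we have $C(I(T_i))=s_R$ and $C(I(\overline{T}_i))=s_L$. So when request $r_i$ arrives inside $I(T_i)$ it is forced onto the far server of the opposite subtree, and likewise (mirrored) inside $I(\overline{T}_i)$. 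This replaces the symmetry argument of the previous proof: instead of deriving the ``match to opposite ends'' property from the algorithm's symmetry, we now engineer it directly.

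Next I would compute the geometry. Let $L_i$ denote the diameter (distance between leftmost and rightmost server) of a tree $T_i$ at the moment just before its request was matched, i.e.\ the length of $I(T_i)$. From the construction $I(T_i)$ consists of a copy of $I(T_{i-1})$, then a gap $a_i$ to $r_i$, then a gap $b_i$ (or $b_i+\epsilon$ on the other side) to the far subtree $I(\overline{T}_{i-1})$, which also has diameter $L_{i-1}$. Hence $L_i = 2L_{i-1} + a_i + b_i + O(\epsilon) = 2L_{i-1} + x_i + O(\epsilon)$, with $L_0 = 0$. Solving this linear recurrence gives $L_i = \sum_{j=1}^{i} x_j 2^{\,i-j} + O(\epsilon)$. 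When $r_i$ is matched, all servers of the two subtrees are already matched except the two outermost ones, so $r_i$ is sent to distance at least roughly $L_i - L_{i-1} \ge \tfrac12 L_i$ away (taking $\epsilon$ small). Therefore a single request on level $i$ contributes cost $\Omega(L_i)$. There are $2^{k-i}$ such requests on level $i$ in a depth-$k$ tree, so
\[
c(\ALG) \;=\; \Omega\!\left(\sum_{i=1}^{k} 2^{k-i} L_i\right) \;=\; \Omega\!\left(\sum_{i=1}^{k} 2^{k-i}\sum_{j=1}^{i} x_j 2^{\,i-j}\right) \;=\; \Omega\!\left(2^{k}\sum_{j=1}^{k} x_j 2^{-j}\,(k-j+1)\right),
\]
after swapping the order of summation. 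Meanwhile the optimal solution matches each $r_i$ to a neighbouring server at cost $O(a_i+b_i+\epsilon)=O(x_i)$, and with $2^{k-i}$ requests per level this gives $c(\OPT)=O\!\big(\sum_{i=1}^{k}2^{k-i}x_i\big)=O\!\big(2^{k}\sum_{i=1}^{k}x_i 2^{-i}\big)$ — here one must check $\OPT$ can indeed afford these neighbouring matches simultaneously, which follows because the leftover free server at each level can be absorbed by the adversary's collocated dummy requests, as noted in the text.

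Dividing, the competitive ratio is $\Omega\!\big(\sum_j x_j 2^{-j}(k-j+1)\big/\sum_j x_j 2^{-j}\big) = \Omega\!\big(k+1 - \sum_j x_j 2^{-j} j\big/\sum_j x_j 2^{-j}\big)$, which is exactly the quantity assumed to be $\Omega(k)$ in the hypothesis; and since the instance has $n = \Theta(2^k)$ servers, $k=\Theta(\log n)$, giving the claimed $\Omega(\log n)$ bound. The main obstacle I anticipate is bookkeeping the $\epsilon$-perturbations: one has to argue that a single (level-independent, or suitably shrinking) $\epsilon$ simultaneously forces the choice-function behaviour $C(I(T_i))=s_R$ versus $C(I(\overline{T}_i))=s_L$ on all $k$ levels while only perturbing all distances negligibly, so that the factor-$\tfrac12$ lower bound on the per-request ALG cost and the $O(x_i)$ upper bound on the per-request OPT cost both survive. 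A secondary subtlety is justifying $L_i - L_{i-1} \ge \tfrac12 L_i$, i.e.\ that $L_{i-1}\le\tfrac12 L_i$, which holds because $x_i\ge$ (minimum distance) $\ge 1 > 0$ makes the recurrence $L_i = 2L_{i-1}+x_i$ strictly more than doubling; care is needed if some $x_i$ are tiny relative to $L_{i-1}$, in which case one instead bounds the ALG cost of $r_i$ below by the distance across one whole subtree, $L_{i-1}$, which is still $\Omega(L_i)$ up to the $+x_i$ term — in either regime the per-level ALG cost is $\Omega(L_i)$ and the computation goes through.
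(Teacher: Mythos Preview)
Your overall strategy matches the paper's: construct the two-tree instance with level-wise gaps $a_i,b_i$, compute $c(\OPT)$ and $c(\ALG)$ level by level, swap sums, and arrive at the ratio $k - \bigl(\sum_i x_i 2^{-i} i\bigr)\big/\bigl(\sum_i x_i 2^{-i}\bigr)$. Your final formulas for $c(\ALG)$ and $c(\OPT)$ agree with the paper's.

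There is, however, a genuine gap in how you justify the per-request ALG cost. The claim that $r_i$ is ``sent to distance at least roughly $L_i-L_{i-1}\ge \tfrac12 L_i$'' is wrong on both counts: the matched distance is $L_{i-1}+a_i$ or $L_{i-1}+b_i$, which is \emph{at most} $L_i-L_{i-1}=L_{i-1}+x_i$, not at least; and ``$\ge \tfrac12 L_i$'' would require $\min(a_i,b_i)\ge x_i/2$, which fails whenever $a_i\ne b_i$. Your fallback bound $L_{i-1}$ is correct as a lower bound but is \emph{not} $\Omega(L_i)$ when $x_i$ is large relative to $L_{i-1}$ (e.g.\ $a_i=1$, $b_i$ huge), so the ``either regime'' argument does not close the gap.

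The fix, which is exactly what the paper does, is to avoid a per-request bound and instead pair up the two kinds of level-$i$ requests: among the $2\cdot 2^{k-i}$ requests at level $i$ (across the two top-level trees), half sit inside a $T_i$ and are matched right at cost $L_{i-1}+b_i$, and half sit inside a $\overline{T}_i$ and are matched left at cost $L_{i-1}+a_i$. Summing one of each gives $2L_{i-1}+a_i+b_i=L_i$ exactly, so the level-$i$ contribution is $2^{k-i}L_i$ with equality rather than an $\Omega(\cdot)$, and your formula $c(\ALG)=\sum_{i=1}^k 2^{k-i}L_i=2^k\sum_{j=1}^k x_j 2^{-j}(k-j+1)$ follows cleanly. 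With that correction the rest of your outline goes through and coincides with the paper's proof.
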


\begin{proof}
  It can be easily shown through an exchange argument (see
  also~\cite{DBLP:journals/sigact/Stee16}) that there is always an
  optimal solution that matches $(s_i, r_i)$ when the servers and
  requests are sorted by their position. 
Therefore, there is an optimal solution for the instance described above that matches every request on recursion level $i$ to the next server in the neighboring block $T_{i-1}$ or $\overline{T}_{i-1}$. So every request on level $i$ pays either $a_i$ or $b_i$. Thus the cost of the optimal  solution in the instance is the sum of the optimal solutions on $T_k$ and $\overline{T}_k$, which differ by at most an $\epsilon$.
\begin{align*}
c(\OPT) &= 2\cdot\min \left\{\sum_{i=1}^k 2^{k-i}a_i, \frac{\epsilon}{2} + \sum_{i=1}^k 2^{k-i}\left(b_i + \frac{\epsilon}{2}\right)\right\}\leq \sum_{i=1}^k 2^{k-i}(a_i + b_i + \epsilon)\;.
\end{align*}
With $\epsilon$ arbitrarily small, its contribution to the cost is negligible. For simplicity of notation, we omit all occurrence of $\epsilon$ in the rest of the proof. 

In contrast to the optimal solution, the online algorithm always matches the request $r_i$ to the right in $T_i$ and to the left in $\overline{T}_i$. By construction, the free server after request $r_i$ arrived in subtree $T_i$ is the left most server and respectively in $\overline{T}_i$ the right most server.

Thus on level $i$ the distance to the matched server is $d(r_i, s_R) = b_i + \sum_{j=1}^{i-1}2^{i-1-j}(a_j+b_j)$ and $d(s_L, r_i) = a_i + \sum_{j=1}^{i-1}2^{i-1-j}(a_j+b_j)$. The instance consists of two trees $T_k$ and $\overline{T}_k$, so the cost of the algorithms solution is
\begin{align*}
c(\ALG) &= \sum_{i=1}^k 2^{k-i}\left(d(s_L, r_i) + d(r_i, s_R)\right)=\sum_{i=1}^k 2^{k-i}\sum_{j=1}^i 2^{i-j}(a_j+b_j)\\
&=\sum_{i=1}^k\sum_{j=1}^i 2^{k-j}(a_j + b_j) =\sum_{i=1}^k (a_i+b_i)(2^{k+1-i} - 1)(k+1-i)\;.
%&\geq \sum_{i=1}^k (a_i+b_i)2^{k-i}(k-i)\;.
\end{align*}

We relabel $a_i+b_i=x_i$, this gives us
\begin{align*}
\frac{c(\ALG)}{c(\OPT)}
&\geq \frac{\sum_{i=1}^k (a_i+b_i)(2^{k+1-i} - 1)(k+1-i)}{\sum_{i=1}^{k} x_i 2^{k-i}}\\
&\geq \frac{\sum_{i=1}^{k} (x_i 2^{-i}) (k - i)}{\sum_{i=1}^{k} x_i 2^{-i}}
= k - \frac{\sum_{i=1}^{k} x_i 2^{-i} i}{\sum_{i=1}^{k} x_i 2^{-i}}\;.
\end{align*}
\end{proof}

We give a sufficient condition for Theorem~\ref{thm:deterministic_local} that is easier to work with. If, for every level of the recursive construction, the bias of an online algorithm grows by at most a factor of two with respect to the maximal previous bias, then the online algorithm is $\Omega(\log n)$-competitive.

\begin{proposition}\label{prop:max}
A sufficient condition for Theorem~\ref{thm:deterministic_local} is $x_i\leq 2\max_{j\in[1:i-1]} x_j$.
\end{proposition}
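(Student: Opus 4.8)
The plan is to use Theorem~\ref{thm:deterministic_local}: it suffices to show that the hypothesis $x_i\le 2\max_{j\in[1:i-1]}x_j$ forces the weighted average $\bar{\imath}:=\big(\sum_{i=1}^k x_i2^{-i}\,i\big)/\big(\sum_{i=1}^k x_i2^{-i}\big)$ to be bounded away from $k$ by a constant factor, i.e.\ $k-\bar{\imath}\ge c\,k$ for an absolute constant $c>0$. Writing $w_i:=x_i2^{-i}$ and noting $k-\bar{\imath}=\big(\sum_{i=1}^k(k-i)w_i\big)/\big(\sum_{i=1}^k w_i\big)$, the goal reduces to proving $\sum_{i=1}^k(k-i)w_i\ \ge\ c\,k\sum_{i=1}^k w_i$. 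The first thing to observe is that the obvious pointwise consequence of the hypothesis, namely $w_i\le w_1$ for all $i$, is far too weak on its own: it does not preclude the total weight $\sum_i w_i$ being concentrated near $i=k$, in which case $\bar{\imath}\approx k$. The real content of the hypothesis is that such top-concentration is impossible — to make $w_i$ large one needs $w_{i-1}$ (or $w_{i-2}$) comparably large, so a long run of tiny $w_j$ cannot be followed by a large $w_i$.

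The key step, and the main obstacle, is to capture this monotone ``no-recovery'' structure with the right surrogate. I would introduce $M_i:=\max_{j\le i}x_j$ and $U_i:=M_i2^{-i}$. The hypothesis is exactly $M_i\le 2M_{i-1}$, which translates into three clean properties of $U$: it is non-increasing ($U_i=M_i2^{-i}\le 2M_{i-1}2^{-i}=U_{i-1}$); it never drops by more than a factor $2$ ($U_i\ge M_{i-1}2^{-i}=U_{i-1}/2$); and it sandwiches the $w_i$ via $U_i-\tfrac12 U_{i-1}\le w_i\le U_i$, where the left inequality follows from the elementary bound $x_i\ge M_i-M_{i-1}$ and $w_1=U_1$. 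Everything after this point is routine manipulation of the well-behaved sequence $U$.

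Telescoping the inequality $w_i\ge U_i-\tfrac12U_{i-1}$ over any prefix gives $\sum_{i=1}^t w_i\ge\tfrac12\sum_{i=1}^t U_i$; summing this over $t=1,\dots,k-1$ and swapping the order of summation yields $\sum_{i=1}^k(k-i)w_i\ge\tfrac12\sum_{i=1}^k(k-i)U_i$. Now, since $U$ is non-increasing its prefix averages dominate its overall average, so $\sum_{i=1}^{\lfloor k/2\rfloor}U_i\ge\tfrac{\lfloor k/2\rfloor}{k}\sum_{i=1}^k U_i\ge\tfrac13\sum_{i=1}^kU_i$ for $k\ge 2$; restricting the sum to $i\le k/2$, where $k-i\ge k/2$, gives $\sum_{i=1}^k(k-i)U_i\ge\tfrac{k}{6}\sum_{i=1}^k U_i$. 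Combining the two displayed bounds and then using $w_i\le U_i$ one more time gives
\[
\sum_{i=1}^k(k-i)w_i\ \ge\ \frac{k}{12}\sum_{i=1}^kU_i\ \ge\ \frac{k}{12}\sum_{i=1}^kw_i,
\]
so $k-\bar{\imath}\ge k/12\in\Omega(k)$ and the hypothesis of Theorem~\ref{thm:deterministic_local} is met. The only genuine difficulty is the conceptual move of the second paragraph — seeing that a naive pointwise estimate fails and that $U_i=M_i2^{-i}$ is the monotone quantity that makes the estimate work; the sole technical nuisance afterwards is keeping track of floors in the prefix-average step, but any fixed constant fraction is enough.
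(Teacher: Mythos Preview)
Your argument is correct. The substitution $U_i=M_i2^{-i}$ with $M_i=\max_{j\le i}x_j$ is a clean idea: it converts the awkward hypothesis $M_i\le 2M_{i-1}$ into monotonicity of $U$, and the sandwich $U_i-\tfrac12U_{i-1}\le w_i\le U_i$ then does all the work. Each step checks out (the telescoping, the prefix-average bound for a non-increasing sequence, and the final chain of inequalities); the constant $1/12$ is of course irrelevant for an $\Omega(k)$ conclusion.

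The paper proceeds differently. It proves the stronger intermediate lemma that for \emph{every} $m\le k$ one has $\sum_{i\le m}w_i\big/\sum_{i\le k}w_i\ge m/k$, and does so by a combinatorial argument that picks out the longest strictly increasing subsequence of $(x_i)$ and repeatedly applies the one-step estimate $2^{m-i_j}x_{i_j}/(i_{j+1}-i_j)\ge 2^{m-i_{j+1}}x_{i_{j+1}}/(i_{j+2}-i_{j+1})$. From the lemma at $m=\lceil k/2\rceil$ they conclude $\bar\imath\le 3k/4+1/2$. So the paper obtains a sharper constant and a uniform prefix-sum bound, at the cost of a more intricate induction; your surrogate-sequence approach is shorter and more transparent but gives a looser constant and only the single cut at $k/2$. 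Both routes are valid and the trade-off is essentially elegance versus sharpness.
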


Furthermore, we also show that this sufficient condition is nearly tight. If the choice function has a bias that is increasing by a factor of at least $2+\epsilon$ for some $\epsilon>0$ with each new recursive level of the instance, then we can only give a constant lower bound on the competitive ratio.

\begin{proposition}\label{prop:1+eps}
If, for an online algorithm $A$, the corresponding instance takes the form $x_i \geq (2+\epsilon)x_{i-1}$ with $x_0 = 0$ for $\epsilon > 0$ and for all $i\in[1:\log(n)]$, then the instance only proves a constant competitive ratio.
\end{proposition}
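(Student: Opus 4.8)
The plan is to plug the hypothesis $x_i \geq (2+\epsilon) x_{i-1}$ into the competitive ratio derived in Theorem~\ref{thm:deterministic_local} and show that the correction term $\sum_i x_i 2^{-i} i \big/ \sum_i x_i 2^{-i}$ is within an additive constant of $k$, so that $k$ minus this term is $O(1)$ rather than $\Omega(k)$. The key observation is that writing $y_i := x_i 2^{-i}$, the hypothesis becomes $y_i \geq \tfrac{2+\epsilon}{2} y_{i-1} = (1+\tfrac{\epsilon}{2}) y_{i-1}$, i.e. the weights $y_i$ grow geometrically with ratio at least $q := 1 + \epsilon/2 > 1$. Geometrically increasing weights are dominated by their last term, so the weighted average $\sum_{i=1}^k y_i \, i \big/ \sum_{i=1}^k y_i$ should be close to $k$.

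Concretely, I would carry out the following steps. First, set $y_i = x_i 2^{-i}$ and rewrite the quantity of interest as $k - \overline{\imath}$ where $\overline{\imath} = \sum_{i=1}^k y_i\, i \big/ \sum_{i=1}^k y_i$. Second, bound $k - \overline{\imath} = \sum_{i=1}^k y_i (k-i) \big/ \sum_{i=1}^k y_i$; in the numerator substitute $y_i \leq q^{-(k-i)} y_k$ (which follows by iterating $y_i \geq q\, y_{i-1}$ downward from index $k$), and in the denominator use just the single term $y_k$. This gives
\[
k - \overline{\imath} \;\leq\; \frac{\sum_{i=1}^k q^{-(k-i)} (k-i)\, y_k}{y_k} \;=\; \sum_{j=0}^{k-1} j\, q^{-j} \;\leq\; \sum_{j=0}^{\infty} j\, q^{-j} \;=\; \frac{q^{-1}}{(1-q^{-1})^2} \;=\; \frac{q}{(q-1)^2} \;=\; \frac{1+\epsilon/2}{(\epsilon/2)^2}\;,
\]
a constant independent of $k$. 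Hence the expression in the hypothesis of Theorem~\ref{thm:deterministic_local} is bounded by a constant, so that theorem yields only a constant lower bound; combined with the exact ratio $c(\ALG)/c(\OPT) = k - \overline{\imath}$ (up to the negligible $\epsilon$-terms, which one sets to zero as in the proof of Theorem~\ref{thm:deterministic_local}), the competitive ratio proven by this instance is at most $\frac{1+\epsilon/2}{(\epsilon/2)^2} = \Theta(1/\epsilon^2)$.

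I do not anticipate a real obstacle here; the argument is a routine geometric-series estimate once the reindexing $y_i = x_i 2^{-i}$ is made and one realizes the hypothesis forces $(y_i)$ to grow at rate $1+\epsilon/2$. The only point requiring a little care is handling the base case $x_0 = 0$ (so that the recursion $x_i \geq (2+\epsilon)x_{i-1}$ is vacuous at $i=1$ and imposes no lower bound on $x_1$): this is harmless because the bound $y_i \leq q^{-(k-i)} y_k$ only uses the inequalities for $i \geq 2$, and one may assume $x_1 > 0$ (else the first level is degenerate and can be dropped). One should also note that the proposition is an upper bound on what the instance can prove, not a claim that $A$ is actually $O(1)$-competitive — the construction simply stops being useful once the bias grows this fast.
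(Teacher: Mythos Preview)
Your proposal is correct and takes a more elementary route than the paper. Both make the substitution $y_i = x_i 2^{-i}$ and note that the hypothesis becomes $y_i \geq q\,y_{i-1}$ with $q = 1+\epsilon/2 > 1$. At that point the paper proves a separate comparison lemma (showing $\sum y_i\, i / \sum y_i \geq \sum q^i\, i / \sum q^i$ whenever $y_i \geq q\,y_{i-1}$), then evaluates the geometric sums in closed form to obtain a bound of order $1/\epsilon$. You bypass the lemma entirely: the one-line estimate $y_i \leq q^{-(k-i)} y_k$ together with $\sum_i y_i \geq y_k$ immediately gives $k-\overline{\imath} \leq \sum_{j\geq 0} j\, q^{-j} = q/(q-1)^2 = \Theta(1/\epsilon^2)$. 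Your argument is shorter; the paper's yields the sharper constant $O(1/\epsilon)$.

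One small correction: $k - \overline{\imath}$ is the \emph{lower} bound extracted in Theorem~\ref{thm:deterministic_local}, not the exact value of $c(\ALG)/c(\OPT)$. To conclude that the instance proves only a constant ratio one needs an actual upper bound on $c(\ALG)/c(\OPT)$; the paper first records $c(\ALG)/c(\OPT) \leq 2(k+1) - 2\overline{\imath}$ from the same computations and then bounds $\overline{\imath}$. Your estimate on $k - \overline{\imath}$ plugs into that inequality just as well, so the conclusion is unaffected --- simply replace the phrase ``exact ratio'' by a reference to this upper bound.
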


The proof of both proposition can be found in Appendix~\ref{app:non-symmetric}.

% !TeX root = main.tex

\section{Lower Bounds for Randomized Algorithms}

Our deterministic lower bound in Section~\ref{sec:deterministic_asymmetric} also extends to randomized local algorithms. The main difference is that, in the randomized case, we cannot deduce the right distances between requests and servers $a_i$ and $b_i$ from a deterministic choice function. Instead, we construct the instance in such a way that the position of the free server in every subtree $T_i$ is symmetrically distributed. Then it is easy to see that the algorithm is bound to lose at least a constant fraction in the competitive ratio over the algorithm in the deterministic case.

Again, the instance is constructed analogously to the previous section. The main difference is that now $T_i$ consists of two subtrees $T_{i-1}$ with an additional request $r$ in between. Similarly to before we set the distances between the subtrees and the new request as $d(T_{i-1}, r_i) = a_i$ and $d(r_i, T_{i-1})=b_i$ with the exception of the top level request $r_k$. On level $k$, let $a_k = b_k = 0$. By construction, and as before, every tree $T_i$ contains exactly one free server $s\in T_{i}$. For convenience of notation, $s\in[2^i]$ also denotes the position of $s$ within $T_i$.

\begin{theorem}
\label{thm:randomized} 
Consider any randomized online algorithm $A$, for which (i) $a_i$ and $b_i$ can be chosen in such a way that the distribution $p^i(s)$ over the position of the free server $s(T_i)\sim_{p^i}[2^i]$ is symmetric, and (ii) the distances $x_i = a_i + b_i$ fulfill \[k - \frac{\sum_{i=1}^{k-1}x_i2^{-i}i}{\sum_{i=1}^{k-1} x_i2^{-i}} \in \Omega(k)\;,\] for every $i\in[k-1]$. Algorithm $A$ is $\Omega(\log n)$-competitive on the instance $T_{\log n}$.
\end{theorem}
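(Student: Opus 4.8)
The plan is to reprove Theorem~\ref{thm:deterministic_local} with expectations in place of costs. Write $n=2^{k}$ and let $w_{j}:=\sum_{t=1}^{j}2^{j-t}x_{t}$ be the distance between the leftmost and the rightmost server of any $T_{j}$-block, so $w_{0}=0$ and $w_{j}=2w_{j-1}+x_{j}$. Let $E_{i}$ be the expected cost $A$ pays for one level-$i$ request. Since $A$ is local, its decision at a level-$i$ request only sees the interval $[s_{L},s_{R}]$, which lies inside the level-$i$ block, and it uses randomness independent of what happens in disjoint sub-instances; hence all $2^{k-i}$ level-$i$ blocks are stochastically identical and $\Ex{c(\ALG)}=\sum_{i=1}^{k}2^{k-i}E_{i}$. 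For the optimum, the exchange argument of Theorem~\ref{thm:deterministic_local} (match sorted requests to sorted servers) gives a matching in which every level-$i$ request pays at most $\max\{a_{i},b_{i}\}\le x_{i}$ and $r_{k}$ pays $0$ (as $a_{k}=b_{k}=0$), so up to the negligible $\epsilon$-terms $c(\OPT)\le\sum_{i=1}^{k-1}2^{k-i}x_{i}=2^{k}\sum_{i=1}^{k-1}x_{i}2^{-i}$.

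The core of the argument is the claim $E_{i}\ge\tfrac14 w_{i-1}$ for every $i\in[k]$. Fix a level-$i$ block, made of a left sub-block $T_{i-1}^{L}$, the request $r_{i}$, and a right sub-block $T_{i-1}^{R}$. When $r_{i}$ arrives its only free neighbours are the free server $s_{L}$ of $T_{i-1}^{L}$ and the free server $s_{R}$ of $T_{i-1}^{R}$, and $A$ must match $r_{i}$ to one of them. Let $A_{i}$ be the distance from $s_{L}$ to the right end of $T_{i-1}^{L}$ and $B_{i}$ the distance from $s_{R}$ to the left end of $T_{i-1}^{R}$; then $d(r_{i},s_{L})=A_{i}+a_{i}$ and $d(r_{i},s_{R})=B_{i}+b_{i}$, so the cost of $r_{i}$ is at least $\min\{A_{i},B_{i}\}$ and thus $E_{i}\ge\Ex{\min\{A_{i},B_{i}\}}$. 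The key sub-claim is that $A_{i}$ and $B_{i}$ are independent and that each takes values in $[0,w_{i-1}]$ with mean exactly $w_{i-1}/2$. Independence holds because, by locality, the free server of each sub-block is determined by that sub-block's geometry and its own independent randomness. For the mean: the server positions of any $T_{i-1}$-block are mirror-symmetric --- a purely structural fact, independent of how $a_{j},b_{j}$ split the gap $x_{j}$, since requests merely sit inside those gaps --- and by hypothesis~(i), applied at level $i-1$, the position of the free server of $T_{i-1}$ is symmetrically distributed on $[2^{i-1}]$; hence its distance to either end is a random variable symmetric about $w_{i-1}/2$. Now the elementary inequality $\min\{A,B\}\ge AB/w_{i-1}$, valid for $A,B\in[0,w_{i-1}]$, gives $\max\{A,B\}=A+B-\min\{A,B\}\le A+B-AB/w_{i-1}$, and taking expectations using independence,
\[
\Ex{\max\{A_{i},B_{i}\}}\le\Ex{A_{i}}+\Ex{B_{i}}-\frac{\Ex{A_{i}}\Ex{B_{i}}}{w_{i-1}}=w_{i-1}-\frac{w_{i-1}}{4}=\frac34 w_{i-1},
\]
so $E_{i}\ge\Ex{\min\{A_{i},B_{i}\}}=\Ex{A_{i}}+\Ex{B_{i}}-\Ex{\max\{A_{i},B_{i}\}}\ge\tfrac14 w_{i-1}$.

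Finally I would assemble the estimate exactly as in Theorem~\ref{thm:deterministic_local}: substituting $w_{i-1}=\sum_{j=1}^{i-1}2^{i-1-j}x_{j}$ and swapping the order of summation,
\[
\Ex{c(\ALG)}\ge\frac14\sum_{i=1}^{k}2^{k-i}w_{i-1}=\frac14\sum_{j=1}^{k-1}x_{j}2^{k-1-j}(k-j)=2^{k-3}\sum_{j=1}^{k-1}x_{j}2^{-j}(k-j),
\]
and therefore
\[
\frac{\Ex{c(\ALG)}}{c(\OPT)}\ge\frac18\left(k-\frac{\sum_{j=1}^{k-1}x_{j}2^{-j}j}{\sum_{j=1}^{k-1}x_{j}2^{-j}}\right)\in\Omega(k)=\Omega(\log n),
\]
where the last step is exactly hypothesis~(ii).

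The step I expect to be most delicate is establishing the two properties of $A_{i}$ and $B_{i}$. Pinning down the mean $w_{i-1}/2$ genuinely uses hypothesis~(i): a constant-bias algorithm (which never satisfies~(i)) would drive the free server of each sub-block to an extreme and make $\min\{A_{i},B_{i}\}$ collapse, so the claim fails without it. And the independence of $A_{i}$ and $B_{i}$ relies on reading ``local randomized algorithm'' as one whose choices on disjoint local sub-instances use independent randomness, which is what prevents the adversary-chosen blocks from being globally coupled. Everything else is bookkeeping parallel to Theorem~\ref{thm:deterministic_local}.
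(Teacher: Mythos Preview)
Your proof is correct and follows the same overall scaffold as the paper: bound $c(\OPT)$ by $\sum_{i}2^{k-i}x_i$, show that the expected per-request cost at level $i$ satisfies $E_i\ge c\cdot w_{i-1}$, and sum. The only genuine difference is in how you establish $E_i\ge \tfrac14 w_{i-1}$.

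The paper does it by a crude event argument: by symmetry (hypothesis~(i)) the free server of each $T_{i-1}$ sub-block lies in its far half with probability $1/2$, and by independence both happen with probability $1/4$; on that event the matching edge has length at least of order $w_{i-1}$. You instead exploit the full symmetry to get $\Ex{A_i}=\Ex{B_i}=w_{i-1}/2$, and then use the pointwise inequality $\min\{A,B\}\ge AB/w_{i-1}$ (valid because $\max\{A,B\}\le w_{i-1}$) together with independence to conclude $\Ex{\min\{A_i,B_i\}}\ge w_{i-1}/4$. Your route is a little more elegant---it yields a clean constant without the indexing ambiguities present in the paper's event-based bound---while the paper's route is more elementary in that it never needs the mean of $A_i$, only the median event. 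Both rely on exactly the same two ingredients you flag at the end: the symmetry of the server positions combined with hypothesis~(i) to control the marginal of the free server's location, and the independence of disjoint sub-blocks, which the paper also uses (implicitly, in multiplying the two $1/2$'s).
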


\begin{proof}
If the distribution $p^i$ over the position of the free server $s$ in $T_i$ is symmetrical, then with probability $\frac{1}{2}$ the server is in the first half of $T_i$. In this case $s\leq 2^{i-1}$. Analogously with probability $\frac{1}{2}$, we also have $s\geq 2^{i-1}+1$. If both events occur at the same time, then in $T_{i+1}$ the distance between matched server and request is at least $\min\{d(s_L, r_i), d(r_i, s_R)\} \geq \sum_{j=1}^{i-1} (a_j + b_j)2^{i-1-j}$. Here, we omit the cost of $a_i$ or $b_i$ because the algorithm will not pay both.

Therefore, the expected cost of the online algorithm is at least
\begin{align*}
\Ex{c(\ALG)} &\geq \sum_{i=1}^k \frac{1}{4} 2^{k-i}\sum_{j=1}^{i-1} (a_j + b_j)2^{i-1-j}\\
&= \frac{1}{4}\sum_{i=1}^{k-1} (a_i + b_i)(2^{k-i-1}-1)(k-i)\;.
\end{align*}

Similar to the previous section, the cost of the optimal solution are
\[c(\OPT) = \min\left\{\sum_{i=1}^k 2^{k-i}a_i, \sum_{i=1}^k 2^{k-i}b_i\right\} \leq \sum_{i=1}^k 2^{k-i-1}(a_i + b_i) \;.\]

Again we substitute $x_i = (a_i+b_i)$, then the lower bound instance guarantees a competitive ratio of at least
\begin{align*}
\frac{\Ex{c(\ALG)}}{c(\OPT)} &\geq \frac{\frac{1}{4}\sum_{i=1}^{k-1} x_i(2^{k-i-1}-1)(k-i)}{\sum_{i=1}^k 2^{k-i-1}x_i}
%&\geq \frac{2^{k-4}\sum_{i=1}^{k-1}x_i2^{-i}(k-i)}{2^{k-1}\sum_{i=1}^k x_i2^{-i}} 
\geq \frac{\frac{1}{8}\sum_{i=1}^{k-1}x_i2^{-i}(k-i)}{\sum_{i=1}^{k-1} x_i2^{-i}}\;.
\end{align*}
In the last step, we use that $x_k = a_k + b_k = 0$. Now we have an expression similar to the previous proof, the same steps give the desired result.
\end{proof}

An example for a randomized online algorithm for OML is the
\textsc{Harmonic} algorithm by Gupta and
Lewi~\cite{DBLP:conf/icalp/GuptaL12}. They have shown that this
algorithm is $O(\log n)$-competitive in expectation. We show that
\textsc{Harmonic} fulfills the condition in
Theorem~\ref{thm:randomized}, and therefore provide an alternative
that \textsc{Harmonic} is $\Omega(\log n)$-competitive.

\begin{proposition}\label{prop:HarmonicLowerBound}
For the algorithm \textsc{Harmonic}, $a_i = b_i = 1$ yields a symmetric distribution $p^i(s)$ for all $i\in [k]$.
\end{proposition}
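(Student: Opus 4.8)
\textbf{Proof plan for Proposition~\ref{prop:HarmonicLowerBound}.}
The plan is to show that the \textsc{Harmonic} algorithm, run on the recursive instance with symmetric distances $a_i=b_i=1$ at every level, produces a free server whose position inside $T_i$ is symmetrically distributed around the midpoint, by induction on $i$. First I would recall how \textsc{Harmonic} makes its local decision: when a request $r$ arrives with surrounding free servers $s_L$ and $s_R$, it matches $r$ to $s_L$ with probability proportional to $1/d(r,s_L)$ and to $s_R$ with probability proportional to $1/d(r,s_R)$ (normalizing over the two choices). The base case $i=0$ is trivial: $T_0$ consists of a single server, so its ``distribution'' is the point mass at position $1\in[2^0]$, which is symmetric. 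For the inductive step, assume the free-server position in $T_{i-1}$ is distributed according to a symmetric $p^{i-1}$ on $[2^{i-1}]$; I then analyze the formation of $T_i$ from two independent copies of $T_{i-1}$ glued with the request $r_i$ at distances $a_i=b_i=1$ in between.

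The key observation is that with $a_i=b_i=1$, when $r_i$ arrives the two candidate servers $s_L$ (the leftmost free server of the left copy of $T_{i-1}$) and $s_R$ (the rightmost free server of the right copy) are located at distances that are mirror images of each other: $d(r_i,s_L)=1+\delta_L$ where $\delta_L$ is the distance from the leftmost server of the left $T_{i-1}$ to its single free server, and symmetrically $d(r_i,s_R)=1+\delta_R$. Crucially, by the inductive hypothesis the random variable $\delta_L$ (a function of the free-server position in the left copy) has the same distribution as $\delta_R$ (the corresponding quantity in the right copy), because the mirror symmetry of $p^{i-1}$ means ``distance of the free server from the far end'' is distributed identically to ``distance from the near end.'' Hence, conditioned on the realized positions, \textsc{Harmonic} matches left with probability $\frac{1/(1+\delta_L)}{1/(1+\delta_L)+1/(1+\delta_R)}$, and by the exchangeability of $(\delta_L,\delta_R)$ the overall law of the resulting free-server position in $T_i$ is invariant under the reflection of $[2^i]$: the event ``free server ends up in the left copy at relative position $q$'' and the event ``free server ends up in the right copy at the mirrored position'' have equal probability. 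A short computation with the two independent symmetric copies and the reflection map closes the induction.

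The main obstacle I anticipate is being careful about exactly which quantity one tracks: the free server of $T_i$ is the \emph{unmatched} server after $r_i$ is served, so if \textsc{Harmonic} matches $r_i$ to the right, the free server of $T_i$ lies in the \emph{left} copy of $T_{i-1}$ (at its own internally-determined free position), and vice versa. One must verify that ``position within the left copy of $T_{i-1}$'' composed with ``copy chosen'' reflects correctly onto ``position within the right copy,'' using both (a) the inductive symmetry of $p^{i-1}$ within each copy and (b) the fact that the left-vs-right choice probability is itself symmetric under swapping the two copies. Since the distances $a_i=b_i$ are equal, there is no bias introduced at level $i$ beyond what the (symmetric) internal randomness of the subtrees contributes, so the two sources of randomness compose to give a symmetric $p^i$. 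Once symmetry of $p^i$ is established for all $i\in[k]$, condition (i) of Theorem~\ref{thm:randomized} holds with these $a_i,b_i$; condition (ii) holds because $x_i=a_i+b_i=2$ for all $i$, so $x_i\le 2\max_{j<i}x_j$ and Proposition~\ref{prop:max} (adapted to the randomized bound, i.e.\ the quantity $k-\sum_{i} x_i 2^{-i} i/\sum_i x_i 2^{-i}$ with constant $x_i$ is easily seen to be $\Omega(k)$) gives the required growth condition. This yields the claimed $\Omega(\log n)$ lower bound for \textsc{Harmonic}.
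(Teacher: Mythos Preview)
Your approach is essentially the paper's: induct on the level, use that the two copies of $T_{i-1}$ are i.i.d.\ with a symmetric free-server distribution by the inductive hypothesis, and observe that \textsc{Harmonic}'s matching rule depends only on the pair of distances to the surrounding free servers, which is exchangeable under reflection when $a_i=b_i$. One small correction: since $r_i$ sits to the \emph{right} of the left copy, the identity $d(r_i,s_L)=1+\delta_L$ requires $\delta_L$ to be the distance from the free server of the left copy to its \emph{rightmost} (not leftmost) server, and symmetrically $\delta_R$ is measured from the leftmost server of the right copy; with that fix your exchangeability argument goes through exactly as intended.
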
 

The proof of this proposition can be found in Appendix~\ref{app:randomized}.

% !TeX root = main.tex
% !TEX root = main.tex

\section{Discussion}

This paper rules out an $o(\log(n))$-competitive ratio for a
wide class of both deterministic and randomized algorithms for
OML. This means that new algorithmic insights are necessary if one
hopes to obtain such a $o(\log(n))$-competive algorithm for the
problem. It is natural to try and conceptualize a ``reasonable''
deterministic/randomized algorithm that beats our instance.  As
already mentioned, we find it particularily hard to conceptualize such
a local algorithm and we therefore conjecture that the lower bound of
$\Omega(\log n)$ holds for all local algorithms, even though a
different construction would be required to handle algorithms with alternating and exponentially growing bias.
However, it
would be interesting to try to design and analyze a non-local
algorithm, that also employs information from outside of the local-interval
in order to match a request.

The best deterministic algorithm known so far is
$O(\log^2(n))$-competitive, and for many known algorithms the best
known lower bound on their competitive ratio is $\Omega(\log(n))$, it
would be reasonable to work on a tighter analysis for an
existing algorithm in order to (hopefully) prove it
$\Theta(\log(n))$-competitive. 
The WFA algorithm has been conjectured to be $\Theta(\log n)$-competitive before and our work does not change anything on that front. Another promising candidate is the $t$-net-cost algorithm for some $t>1$ because this is the currently best known algorithm and it is not obvious that the analysis is tight for the line metric.

\bibliographystyle{plain}
\bibliography{references}

\begin{appendix}
% !TeX root = main.tex
% !TEX root = main.tex

\section{Missing proofs in Section~\ref{sec:deterministic_asymmetric}}\label{app:non-symmetric}

\subsection{Proof of Proposition~\ref{prop:max}}

We want to show that the condition $x_i \leq 2 \max_{j \in [1:i-1]} x_j$ is sufficient for a logarithmic lower bound.

\begin{lemma}
Let $(x_i)_{i \in [1:k]}$ be a sequence that safisfies the conditions
\begin{enumerate}
\item $x_1 > 0$;
\item $x_i \geq 0$ for $i \in [1:k]$;
\item and $x_i \leq 2 \max_{j \in [1:i-1]} x_j$.
\end{enumerate}
Then, we have for each $m \in [1:k]$
\begin{align*}
\frac{\sum_{i=1}^m 2^{-i}x_i}{\sum_{i=1}^k 2^{-i}x_i} \geq \frac{m}{k}\;.
\end{align*}
\end{lemma}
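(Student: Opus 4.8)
The plan is to prove the equivalent statement that the sequence $\left(P_i/i\right)_{i\in[1:k]}$ is non-increasing, where $P_i:=\sum_{j=1}^{i}2^{-j}x_j$ is the relevant partial sum. Since $P_k\ge P_1=x_1/2>0$ by conditions 1--2, monotonicity of $P_i/i$ gives $P_m/m\ge P_k/k$, i.e.\ $P_m/P_k\ge m/k$, for every $m\le k$. Unwinding definitions, $P_{i+1}/(i+1)\le P_i/i$ is equivalent to $P_i\ge i\cdot 2^{-(i+1)}x_{i+1}$, and since condition 3 says $x_{i+1}\le 2\max_{j\le i}x_j$, it suffices to establish the auxiliary bound
\[
P_i \ \ge\ i\cdot 2^{-i}M_i,\qquad \text{where } M_i:=\max_{j\in[1:i]}x_j ,
\]
for every $i\in[1:k]$; then $2^{-(i+1)}x_{i+1}\le 2^{-i}M_i\le P_i/i$ and we are done.

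I would prove this auxiliary bound by induction on $i$. The base case $i=1$ is the identity $P_1=x_1/2=1\cdot 2^{-1}M_1$. For the step I would split on whether the running maximum grows at step $i$. If $M_i=M_{i-1}$, then discarding the nonnegative term $2^{-i}x_i$ and applying the induction hypothesis gives $P_i\ge P_{i-1}\ge (i-1)2^{-(i-1)}M_{i-1}=2(i-1)2^{-i}M_i\ge i\cdot 2^{-i}M_i$, using $2(i-1)\ge i$ for $i\ge 2$. If instead $M_i=x_i>M_{i-1}$, then $2^{-i}x_i=2^{-i}M_i$, and I would combine the induction hypothesis with the structural fact that the running maximum at most doubles, $M_i\le 2M_{i-1}$ (which is exactly condition 3 rephrased), to get $P_i=P_{i-1}+2^{-i}M_i\ge (i-1)2^{-(i-1)}M_{i-1}+2^{-i}M_i\ge (i-1)2^{-i}M_i+2^{-i}M_i=i\cdot 2^{-i}M_i$.

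The main obstacle — and the only place the hypothesis genuinely bites — is this second case: when $x_i$ is a fresh record the induction hypothesis controls $P_{i-1}$ only through the smaller quantity $M_{i-1}$, so one must pay for the jump from $M_{i-1}$ to $M_i$, and the bounded-growth assumption $M_i\le 2M_{i-1}$ is exactly tight enough to absorb that factor of two (this is also morally why Proposition~\ref{prop:1+eps} shows the condition cannot be relaxed below a factor of two). Everything else is routine bookkeeping: from the auxiliary bound one gets $P_{i+1}=P_i+2^{-(i+1)}x_{i+1}\le\tfrac{i+1}{i}P_i$, hence $P_i/i$ is non-increasing, hence $P_m/P_k\ge m/k$.
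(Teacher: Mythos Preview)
Your proof is correct. Both you and the paper reduce the lemma to the monotonicity of $P_m/m$ and then to the auxiliary bound $\sum_{j=1}^{m}2^{m-j}x_j\ge m\,M_m$ (equivalently $P_m\ge m\cdot 2^{-m}M_m$). The difference lies in how this auxiliary bound is established. The paper extracts the longest strictly increasing subsequence $x_{i_1}<\cdots<x_{i_\ell}$ of record values, proves a pairwise comparison $\frac{2^{m-i_j}x_{i_j}}{i_{j+1}-i_j}\ge\frac{2^{m-i_{j+1}}x_{i_{j+1}}}{i_{j+2}-i_{j+1}}$, and telescopes through the records to reach $m\,M_m$. Your argument instead runs a direct induction on $m$ with a two-case split on whether $x_m$ is a fresh record; in the non-record case the factor-of-two slack in the exponent already suffices, and in the record case the hypothesis $M_m\le 2M_{m-1}$ is used exactly once and exactly tightly. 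Your route is shorter and more elementary, and it makes transparent where the constant $2$ in condition~3 is consumed; the paper's subsequence argument is more structural but arrives at the same inequality with essentially the same slack.
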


\begin{proof}
We will show for $m \in [1:k-1]$ that we have
\begin{align*}
\frac{1}{m} \sum_{i=1}^m x_i 2^{-i}
\geq \frac{1}{m+1} \sum_{i=1}^{m+1} x_i 2^{-i}\;.
\end{align*}
The statement then follows by repeated application of this identity.

Using basic calculations we can rewrite this as follows
\begin{align*}
\sum_{i=1}^m 2^{m-i} x_i \geq \frac{m}{2} \cdot x_{m+1}\;.
\end{align*}
Since we allow $x_{m+1}$ to be as large as $2\max_{1 \leq i \leq m} x_i$, it satisfies to show
\begin{align*}
\sum_{i=1}^m 2^{m-i} x_i \geq m \cdot \max_{1 \leq i \leq m} x_i\;.
\end{align*}
Let $i_1 < i_2 < \ldots < i_\ell$ denote the longest subsequence such that $x_{i_1} < x_{i_2} < \ldots < x_{i_\ell}$. Note, that $x_{i_\ell} = \max_{i \in [m]} x_i$. Furthermore, we set $i_{\ell+1} := m+1$.

We make the following observation: Let $j < \ell-1$. Then we have
\begin{align}
\label{eqn:recursiveEstimation}
\frac{2^{m-i_j}x_{i_j}}{i_{j+1}-i_j} \geq \frac{2^{m-i_{j+1}}x_{i_{j+1}}}{i_{j+2}-i_{j+1}}\;.
\end{align}
We can see this as follows: Rewriting the expression and using that $x_{i_{j+1}} \leq 2x_{i_j}$ we obtain
\begin{align*}
2^{i_{j+1}-i_j} \geq \frac{i_{j+1}-i_j}{i_{j+2}-i_{j+1}}\;.
\end{align*}
We see that the right hand side is maximized if the denominator is equal to $1$. Therefore, we obtain the estimate $2^{i_{j+1}-i_j} \geq i_{j+1}-i_j$. But this is clear, since $i_{j+1}-i_j$ is a natural number.

Then, a repeated application of (\ref{eqn:recursiveEstimation}) yields
\begin{align*}
\sum_{i=1}^m 2^{m-i}x_i
& \geq \sum_{u=1}^{\ell} 2^{m - i_u} x_{i_u} \\
& = 2^{m-i_1} x_{i_1} + \sum_{u=2}^\ell 2^{m - i_u} x_{i_u} \\
& \geq \frac{i_2 - i_1}{i_3 - i_2} 2^{m-i_2} x_{i_2} + \sum_{u=2}^\ell 2^{m - i_u} x_{i_u} \\
& = \left(\frac{i_2 - i_1}{i_3 - i_2} + 1 \right) 2^{m-i_2} x_{i_2} + \sum_{u=3}^\ell 2^{m - i_u} x_{i_u} \\
& = \frac{i_3 - i_1}{i_3 - i_2} 2^{m-i_2} x_{i_2} + \sum_{u=3}^\ell 2^{m - i_u} x_{i_u} \\
& \geq \ldots \\
& \geq \frac{i_{\ell+1}-i_1}{i_{\ell+1}-i_\ell} 2^{m-i_\ell} x_{i_\ell}\;.
\end{align*}

At first consider the case that $i_{\ell} = m$. Then, the sum is lower bounded by $\frac{(m+1)-1}{(m+1)-m} 2^{m-m} x_m = m x_m$. Now assume that $i_\ell < m$. Then we want to show that
\begin{align*}
\frac{m+1-1}{m+1-i_\ell} 2^{m - i_\ell} \max_{i \in [m] } x_i
\geq m \max_{i \in [m] } x_i\;.
\end{align*}
But this is true if $2^{m-i_\ell} \geq 1 + m- i_\ell$. Since $m > i_\ell$ due to our assumption, this holds true. Therefore, the statement follows.
\end{proof}

\begin{proof}[of Proposition~\ref{prop:max}]
Now we can upper bound the expression
\begin{align*}
\frac{\sum_{i=1}^k (2^{-i}x_i)i}{\sum_{i=1}^k 2^{-i}x_i}\;.
\end{align*}

It follows from the previous lemma that at least half of the mass of the probability distribution is located on the set $\{1, \ldots, \lceil k/2 \rceil\}$. Therefore, we have
\begin{align*}
\frac{\sum_{i=1}^k (2^{-i}x_i)i}{\sum_{i=1}^k 2^{-i}x_i}
\leq \lceil k/2 \rceil / 2 + k/2
\leq 3k/4 + 1/2\;.
\end{align*}
\end{proof}

\subsection{Proof of Proposition~\ref{prop:1+eps}}

For the proof of this proposition, we require the following technical lemma.

\begin{lemma}\label{lemma:sum}
Let $y_i \geq c\cdot y_{i-1}$ with $c>1$ for all $i\in [2:k]$, then we have
\[\frac{\sum_{i=1}^k y_i i}{\sum_{i=1}^k y_i} \geq \frac{\sum_{i=1}^k c^i i}{\sum_{i=1}^k c^i}\;.\] 
\end{lemma}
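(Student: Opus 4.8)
The plan is to read both sides as expectations of the increasing function $f(i)=i$ under two probability distributions on $\{1,\dots,k\}$: the left-hand side is $\mathbb{E}_{i\sim p}[i]$ for $p_i \propto y_i$, and the right-hand side is $\mathbb{E}_{i\sim q}[i]$ for $q_i \propto c^i$. The hypothesis $y_i \geq c\,y_{i-1}$ is exactly the statement that the sequence $r_i := y_i/c^i$ is non-decreasing, since $r_i = y_i/c^i \geq c\,y_{i-1}/c^i = y_{i-1}/c^{i-1} = r_{i-1}$. As $p_i/q_i$ is proportional to $r_i$, this is a monotone likelihood ratio between $p$ and $q$, so $p$ first-order stochastically dominates $q$ and therefore has the larger mean. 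I would record this as the guiding intuition but then give the self-contained symmetrization argument below, which avoids quoting stochastic dominance as a black box.

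Concretely: assuming $\sum_i y_i > 0$ (the statement is vacuous otherwise), clear denominators so that the claim becomes equivalent to
\[
\Big(\sum_{i=1}^k i\, y_i\Big)\Big(\sum_{j=1}^k c^j\Big) \;\geq\; \Big(\sum_{i=1}^k i\, c^i\Big)\Big(\sum_{j=1}^k y_j\Big).
\]
Expanding both products into double sums and relabelling the summation indices in the second one turns the difference of the two sides into $\sum_{i,j}(i-j)\,y_i c^j$, with the factor $y$ carried by the first index and $c$ by the second. Pairing the term for $(i,j)$ with the term for $(j,i)$ (the diagonal terms vanishing) gives
\[
\sum_{1\le i<j\le k} (j-i)\big(y_j c^i - y_i c^j\big) \;=\; \sum_{1\le i<j\le k} (j-i)\,c^i c^j\,(r_j - r_i) \;\ge\; 0,
\]
since for $i<j$ we have $j-i>0$, $c^i c^j>0$, and $r_j \ge r_i$ by monotonicity of $(r_i)$. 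Dividing back by the positive quantities $\sum_j y_j$ and $\sum_j c^j$ yields the lemma.

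The only part that needs care is the sign bookkeeping in the symmetrization: one must track that, after swapping the roles of the two summation indices in the second product, the $y$-weight stays on one index and the $c$-weight on the other, so that pairing $(i,j)$ with $(j,i)$ cleanly factors out $c^i c^j$ and exposes the monotone difference $r_j - r_i$. Everything else is routine; in particular the inequality holds term by term, so no asymptotics, case analysis, or appeal to the growth rate beyond $r_i \le r_{i+1}$ is required. (An equivalent route, matching the telescoping style used elsewhere in the appendix, is to show $\frac{1}{m}\sum_{i=1}^m y_i \cdot(\text{tail weights}) \ge \frac{1}{m+1}\sum_{i=1}^{m+1} y_i\cdot(\dots)$ level by level using $y_{m+1}\ge c\,y_m$, but the symmetrization above is shorter.)
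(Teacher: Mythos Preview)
Your proof is correct and takes a genuinely different route from the paper's. Both arguments begin by clearing denominators to reduce the claim to $\sum_{i,j}(i-j)\,y_i c^j \ge 0$, but they diverge from there. The paper splits this double sum according to the sign of $i-j$, rewriting it as
\[
\sum_{i=1}^k\sum_{j=1}^{i-1} y_i\, j\, c^{\,i-j} \;\ge\; \sum_{i=1}^k\sum_{j=1}^{k-i} y_i\, j\, c^{\,i+j},
\]
then applies the hypothesis in its iterated form $y_i \ge c^{\,j} y_{i-j}$ to every term on the left and finishes with two changes of summation order to match the right-hand side. Your symmetrization instead pairs $(i,j)$ with $(j,i)$ and factors $y_j c^i - y_i c^j = c^i c^j\,(r_j - r_i)$ with $r_i = y_i/c^i$, so that non-negativity follows termwise from the monotonicity of $(r_i)$. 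Your approach is shorter and conceptually cleaner (it is exactly the monotone-likelihood-ratio / Chebyshev correlation argument you identified up front), and it uses the hypothesis only in the one-step form $r_i \ge r_{i-1}$; the paper's argument is more hands-on and requires tracking a less obvious reindexing, though it arrives at the same conclusion.
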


\begin{proof}
We multiply both sides with the denominators and reorder the sums such that we can apply the condition for all $i$.

\begin{alignat*}{3}
&\sum_{i=1}^k y_i i \sum_{j=1}^k c^j &&\geq \sum_{i=1}^k c^i i \sum_{j=1}^k y_j\\
\Leftrightarrow\qquad & \sum_{i=1}^k \sum_{j=1}^k y_i(i-j)c^j && \geq 0\\
\Leftrightarrow\qquad & \sum_{i=1}^k\sum_{j=1}^{i-1} y_i j c^{i-j} && \geq \sum_{i=1}^k\sum_{j=1}^{k-i} y_i j c^{i+j}\;.
\end{alignat*}
Up to this point, we rework the sum such that all $y_ic^j$ pairs only on one side of the inequality. 

The condition $y_i \geq cy_{i-1}$ also implies $y_i \geq c^jy_{i-j}$. We apply this inequality to all terms on the left-hand side $j$ times and change the order of summation twice to get
\begin{align*}
\sum_{i=1}^k\sum_{j=1}^{i-1} y_i j c^{i-j} &\geq \sum_{i=1}^k\sum_{j=1}^{i-1} y_{i-j} j c^{i}\\
&=\sum_{i=1}^k\sum_{j=1}^{i-1}y_{k-i+1}jc^{k-i+j+1}\\
&=\sum_{i=1}^k\sum_{j=1}^{k-i} y_i j c^{i+j}\;.
\end{align*}
\end{proof}

\begin{proof}[of Proposition~\ref{prop:1+eps}]
From the proof of Theorem~\ref{thm:deterministic_local}, we know that
\begin{align*}
\frac{c(\ALG)}{c(\OPT)} &\leq \frac{\sum_{i=1}^k x_i(2^{k+1-i} - 1)(k+1-i)}{\sum_{i=1}^k x_i2^{k-i}}\\
&\leq 2(k+1) - 2\cdot\frac{\sum_{i=1}^k x_i 2^{-i} i}{\sum_{i=1}^k x_i 2^{-i}}\;.
\end{align*}

We substitute $\frac{x_i}{2^i} = y_i$, then Lemma~\ref{lemma:sum} gives us
\[\frac{\sum_{i=1}^k x_i2^{-i}i}{\sum_{i=1}^{k}x_i2^{-i}} =\frac{\sum_{i=1}^k y_i i}{\sum_{i=1}^k y_i} \geq \frac{\sum_{i=1}^k (1+\epsilon)^i i}{\sum_{i=1}^k (1+\epsilon)^i}\;,\]
with $y_i \geq (1+\epsilon)y_{i-1}$ for all $i\in[2:k]$.

\begin{align*}
\frac{\sum_{i=1}^k (1+\epsilon)^i i}{\sum_{i=1}^k (1+\epsilon)^i}
&= \frac{k(1+\epsilon)^{k+1}-(k+1)(1+\epsilon)^k + 1}{\epsilon((1+\epsilon)^k - 1)}\\
&= \frac{(k\epsilon - 1)(1+\epsilon)^{k+1} + 1 + \epsilon}{\epsilon((1+\epsilon)^{k+1}-1)}\\
&= -\frac{1}{\epsilon} + \frac{k(1+\epsilon)^{k+1} + 1}{(1+\epsilon)^{k+1} - 1}\;.
\end{align*}
This tends to $k - \frac{1}{\epsilon}$ as $k$ tends to infinity. Therefore we have
\[\frac{c(\ALG)}{c(\OPT)} \leq 2(k+1) - 2\left(k-\frac{1}{\epsilon}\right) = 2 + \frac{1}{\epsilon}\in O(1)\;.\]
\end{proof}

% !TeX root = main.tex
% !TEX root = main.tex

\section{Proof of Proposition~\ref{prop:HarmonicLowerBound}}\label{app:randomized}

Without loss of generality, the left-most server of the instance is at position $2$. So, with $a_i = b_i = 1$, our set of servers is given by $\{2, 4, \ldots, 2^{k+1}\}$. Furthermore, there are only $2^{k}-1$ requests, so one server will be remained unmatched. Obviously the server that is left open in the optimal solution could also be matched for no additional cost, but for our lower bound we ignore this cost for the online algorithm.

\begin{proof}
In $T_k$ the last request $r$ will arrive at position $2^k+1$. For $s \in \{2, \ldots, 2^{k+1}\}$ we denote by $\tilde{s}$ the position of the server if we mirror $s$ at $r$, that is $\tilde{s} = 2r - s$.

We will show via induction that $p^k(s)$ is symmetric, that is, we have $p^k(s) = p^k(\tilde{s})$.

We start with the base case $k=1$. In this case, both servers have distance $1$ to the request. Therefore, the algorithm chooses both servers with equal probability. So we have $p^1(2) = p^1(4) = 1/2$.

We proceed with the inductive step: So our instance $T_{k+1}$ is given by two subinstances from the previous step on the sets $L := \{2, \ldots, 2^{k+1}\}$ and $R := \{2^{k+1} + 2, \ldots, 2^{k + 2}\}$ and the new request arrives at position $2^{k+1} + 1$. From the induction hypothesis we know that we have a symmetric distribution $p_L^k$ on $L$ and a symmetric distribution $p_R^k$ on $R$. 

We observe that the following two points are true:
\begin{itemize}
\item For $s \in L$ it is $p_L^k(s) = p_R^k(s + 2^{k+1})$, and for $s \in R$ it is $p_R^k(s) = p_L^k(s - 2^{k+1})$. This follows from the construction of our instance.
\item For $s \in R$ we have $p_L^k(s - 2^{k+1}) = p_L^k(\tilde{s})$, and for $s \in L$ we have $p_R^k(s + 2^{k+1}) = p_R^k(\tilde{s})$. This follows from the induction hypothesis.
\end{itemize}

From the definition of \textsc{Harmonic} it follows that we have for $x \in L$ and $y \in R$
\begin{align*}
& \prob{\left. \text{$r$ is matched to $y$} \right| \text{$x,y$ are unmatched}} \\
= & \prob{\left. \text{$r$ is matched to $\tilde{y}$} \right| \text{$\tilde{x},\tilde{y}$ are unmatched}}.
\end{align*}

Therefore, we have
\begin{align*}
p^{k+1}(x)
& = p_L^k(x) \sum_{y \in R} p_R^k(y) \prob{\left. \text{$r$ is matched to $y$} \right| \text{$x,y$ are unmatched}} \\
& = p_R^k(\tilde{x}) \sum_{y \in R} p_L^k(\tilde{y}) \prob{\left. \text{$r$ is matched to $\tilde{y}$} \right| \text{$\tilde{y}, \tilde{x}$ are unmatched}} \\
& = p_R^k(\tilde{x}) \sum_{y \in L} p_L^k(y) \prob{\left. \text{$r$ is matched to $y$} \right| \text{$y, \tilde{x}$ are unmatched}} \\
& = p^{k+1}(\tilde{x}).
\end{align*}
\end{proof}
\end{appendix}
\end{document}